\tikzset{>=latex'}
\newtheorem{Lemma}{Lemma}
\newtheorem{Prop}{Proposition}
\newtheorem{Theorem}{Theorem}
\newtheorem{Rem}{Remark}
\newcommand{\Vast}{\bBigg@{4}}
\definecolor{cn}{rgb}{0,0,0}
\definecolor{to_change}{rgb}{0,0,0}
\definecolor{comm_Ay}{rgb}{0,0,0}
\definecolor{Comm_An}{rgb}{0,0,0}
\definecolor{Comm_Ch}{rgb}{0,0,0}
\begin{document}

\title{Extended Generalized DoF Optimality Regime of Treating Interference as Noise in the X Channel}

\author{\IEEEauthorblockN{Soheil Gherekhloo, Anas Chaaban, and Aydin Sezgin}\thanks{This work was supported by the DFG grant SE 1697/10.
}
\IEEEauthorblockA{Institute of Digital Communication Systems\\
Ruhr-Universit\"at Bochum\\
Email: soheyl.gherekhloo,anas.chaaban,aydin.sezgin@rub.de}
}

\maketitle

\begin{abstract}
The simple scheme of treating interference as noise (TIN) is studied in this paper for the $3\times2$ X channel. A new sum-capacity upper bound is derived. This upper bound is transformed into a generalized degrees-of-freedom (GDoF) upper bound, and is shown to coincide with the achievable GDoF of a scheme that combines TDMA and TIN for some conditions on the channel parameters. These conditions specify a noisy interference regime which extends noisy interference regimes available in literature. As a by-product, the sum-capacity of the $3\times 2$ X channel is characterized within a constant gap in the given noisy interference regime.
\end{abstract}

\IEEEpeerreviewmaketitle


\section{Introduction}

Dealing with interference is a main challenge in wireless communications. 
Compared with noise, interference contains information. Using this
property, some techniques have been investigated which decode the interference~\cite{HanKobayashi1981} 
in order to have a cleaner version of the received signal. This was  shown to be optimal in some cases such as 2-user interference channel (IC) with strong interference~\cite{Sato, Carleial_vsi}. On the other hand,
there is another extreme case in which the interference is so weak that
the undesired receiver is not able to decode it. Ignoring the interference
completely at the undesired receiver is a common way to deal with interference in such cases. 
This technique is known as treating interference as noise
(TIN)~\cite{CharafeddineSezginPaulraj}.

TIN is simple from a computational point of view, and is not demanding in terms of channel state information and coordination between different nodes. 
This simplicity makes TIN an appropriate choice for practical communication scenarios.
The practical advantages of TIN makes it interesting to identify cases where TIN is capacity-optimal~\cite{ShangKramerChen}. In~\cite{EtkinTseWang}, it is shown
that TIN is capacity optimal within a gap of $1$ bit in a 2-user IC which satisfies $\sqrt{\text{INR}}< \text{SNR}$. TIN is constant-gap optimal in the symmetric $K$-user IC ($K>2$) under the same condition~\cite{JafarVishwanath}. This result has been extended for asymmetric $K$-user IC in~\cite{GengNaderializadehAvestimehrJafar}. 
The optimality of TIN has also been studied for scenarios in which numbers of transmitters and receivers are not equal~\cite{ChaabanSezgin_SubOptTIN}. In such cases, it turns out that the constant-gap optimality regime of TIN is increased~\cite{GherekhlooDiChaabanSezgin2014} by switching some users off.
For a more general $M\times N$ X channel, conditions for \textcolor{comm_Ay}{constant gap optimality in the} noisy
interference regime were identified in~\cite{GengSunJafar2014}. \textcolor{comm_Ay}{The X channel models a cellular network, in which a user is communicating with multiple base stations in order to achieve a soft hand-over.}

In this paper, we show that the noisy interference regime of the X channel \textcolor{comm_Ay}{for which a constant gap to capacity is achieved} is in fact larger than the one given in~\cite{GengSunJafar2014}. To do this, we consider a $3\times2$ X channel for simplicity, and we derive a noisy interference regime where a TDMA-TIN scheme (which combines TDMA and TIN) is constant-gap optimal. The resulting noisy
interference regime identified in our work not only subsumes the
regime in~\cite{GengSunJafar2014}, but also extends it. This is mainly due to a novel
upper bound that we establish in this paper.

Throughout the paper, we use $C(x)=\log_2(1+x)$ for $x>0$, $\bar{x}=1-x$ for $x\in[0,1]$, and $x^n=(x_1,\cdots,x_n)$.

\section{System Model}
\label{Model}
The system we consider is a $3\times 2$ Gaussian X channel which consists of three senders and two receivers (Figure \ref{sysmod}). Each sender wants to communicate with each receiver. Namely, transmitter $i$ (Tx$i$) wants to send the messages $W_{ji}$ to receiver $j$ (Rx$j$), where $i\in\{1,2,3\}$ and $j\in\{1,2\}$. The message $W_{ji}$ has a rate $R_{ji}$. Tx$i$ encodes $(W_{1i},W_{2,i})$ into a codeword $x_i^n\in\mathbb{C}^n$ of $n$ symbols. The transmitters have power constraints $\rho$ which must be satisfied by their transmitted signals. 

\newcommand{\GaussianSystemmodel}[0]{
\node (t3) at (0,0) [inner sep=0] {};
\node (t1) at (0,-2) [inner sep=0] {};
\node (t2) at (0,-4) [inner sep=0] {};
\node (r1) at (3.5,-1) [inner sep=0] {};
\node (r2) at (3.5,-3) [inner sep=0] {};
\node (Z1U) at (3.6,-.3) [inner sep=0] {};
\node (Z1D) at (3.6,-.9) [inner sep=0] {};
\node (Z2D) at (3.6,-3.7) [inner sep=0] {};
\node (Z2U) at (3.6,-3.1) [inner sep=0] {};
\draw[->] (t3) to (r1);
\draw[->] (t1) to (r1);
\draw[->] (t2) to (r2);
\draw[->] (Z1U) to (Z1D);
\draw[->] (Z2D) to (Z2U);
\draw[->] (t3) to (r2);
\draw[->] (t1) to (r2);
\draw[->] (t2) to (r1);
\draw [fill=white, white] (0.6,-4) rectangle (1.3,0);
\node (h13) at (1,-0.3) [inner sep=0] {$h_{11}$};
\node (h23) at (1,-0.9) [inner sep=0] {{$h_{21}$}};
\node (h11) at (1,-1.7) [inner sep=0] {{$h_{12}$}};
\node (h22) at (1,-3.8) [inner sep=0] {{$h_{23}$}};
\node (h21) at (1,-2.3) [inner sep=0] {{$h_{22}$}};
\node (h22) at (1,-3.2) [inner sep=0] {{$h_{13}$}};
\node (tx3) at (-1.5,0) [inner sep=0] {\small{$\begin{pmatrix}W_{11}\\W_{21}\end{pmatrix}\rightarrow x_1^n$}};
\node (tx1) at (-1.5,-2) [inner sep=0] {\small{$\begin{pmatrix}W_{12}\\W_{22} \end{pmatrix}\rightarrow x_2^n$}};
\node (tx2) at (-1.5,-4) [inner sep=0] {\small{$\begin{pmatrix}
 W_{13} \\ W_{23}
\end{pmatrix}\rightarrow x_3^n$}};
\node (W1Rl) at (3.6,-1) [inner sep=0] {{$\oplus$}};
\node (W2Rl) at (3.6,-3) [inner sep=0] {{$\oplus$}};
\node (Z_1) at (3.75,0) [inner sep=0] {{$z_1^n$}};
\node (Z_2) at (3.75,-4) [inner sep=0] {{$z_2^n$}};
\node (Rx1) at (5.2,-1) [inner sep=0] {\small{  $\rightarrow y_1^n\rightarrow \begin{pmatrix}
\hat{W}_{11}\\ \hat{W}_{12}\\ \hat{W}_{13}\end{pmatrix}$}};
\node (Rx2) at (5.2,-3) [inner sep=0] {\small{$\rightarrow y_2^n\rightarrow \begin{pmatrix}
\hat{W}_{21}\\ \hat{W}_{22}\\ \hat{W}_{23}
\end{pmatrix}$}};
}
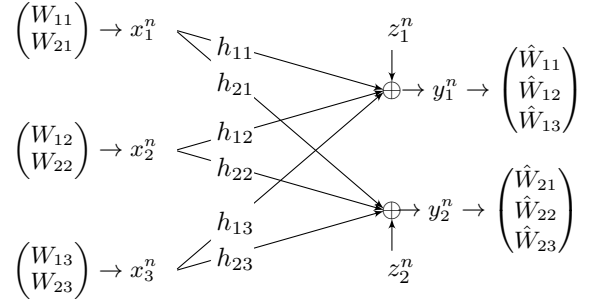
\begin{figure}[t]
\centering
\textcolor{comm_Ay}{
\begin{tikzpicture}[scale=0.8]
\GaussianSystemmodel
\end{tikzpicture}}
\caption{System model of the $3\times2$ Gaussian X channel.}
\label{sysmod}
\end{figure}   

At time instant $t\in\{1,\cdots,n\}$, Rx$j$ receives\footnote{The time index $t$ will be suppressed henceforth for clarity.}
\begin{align}
\label{recsig1}
y_j[t]&= h_{j1} x_1[t]+h_{j2} x_2[t]+h_{j3} x_3[t]+z_j[t],
\end{align}
where $z_j[t]$, $j\in\{1,2\}$, is a complex-valued Gaussian noise with zero mean and unit variance, and the constant $h_{ji}$ represents the complex (static) channel coefficient between Tx$i$ and Rx$j$. \textcolor{comm_Ay}{Since noise variance is unit, the transmit signal-to-noise ratio is given by $\rho$.} The noises $z_1$ and $z_2$ are independent of each other and are both independent and identically distributed (i.i.d.) over time. 

After $n$ transmissions, Rx$j$ has $y_j^n$ and decodes $W_{ji}$, $i\in\{1,2,3\}$. The probability of error, achievable rates $R_{ji}$, capacity region $\mathcal{C}$ are defined in the standard Shannon sense \cite{CoverThomas}. The sum-capacity $C_\Sigma$ is the maximum achievable sum-rate $R_\Sigma=\sum_{i=1}^3\sum_{j=1}^2R_{ji}$ for all rate tuples in the capacity region $\mathcal{C}$.

In this work, 
we focus on the interference limited scenario, and hence, we assume that all signal-to-noise and interference-to-noise ratios are larger than 1, i.e., $\rho|h_{ji}|^2>1$ for all $j\in\{1,2\}$ and $i\in\{1,2,3\}$. Defining
\begin{align}
\alpha_{ji}=\frac{\log_2(\rho|h_{ji}|^2)}{\log_2(\rho)}, \label{eq:def_alpha}
\end{align} we define the generalized degrees-of-freedom (GDoF) of the channel as in \cite{GengNaderializadehAvestimehrJafar}
\begin{align}
\label{GDoFdef}
d_\Sigma(\boldsymbol{\alpha})=\lim_{\rho\to\infty}\frac{C_{\Sigma}(\boldsymbol{\alpha})}{\log_2(\rho)},
\end{align}
where $\boldsymbol{\alpha}$ is a vector which contains all $\alpha_{ji}$.

The focus of this work is studying constant gap optimality of TIN for the $3\times 2$ Gaussian X channel. Next, we introduce the transmission strategy we propose in this paper.

\section{TDMA-TIN}
\label{sec:TDMA-TIN}
In this scheme, we allow only two transmitters to be active simultaneously. In addition to this, for each active transmitter only one dedicated receiver is considered. \textcolor{comm_Ay}{Thus, we decompose the X channel into its underlying interference channels (IC).} In total, we have six  2-user IC's in the $3\times2$ X channel. Using TDMA, we assign a $\tau_s>0$ fraction of time to each of those six 2-user IC's, with $\sum_{s=1}^6 \tau_s = 1$. If the achievable sum-rate using TIN for one of those IC's is $R_s$, then the achievable sum-rate of TDMA-TIN is given by 
\begin{align}
R_{TT}= \max_{\tau_1,\ldots,\tau_6}\sum_{s=1}^6 \tau_s R_s.
\end{align}
This optimization problem is linear in $\tau_s$ and is solved by setting $\tau_s=1$ for some $s\in\{1,\cdots, 6\}$ and setting the remaining $\tau_{s'}=0$. Namely, the maximization above is achieved by activating the 2-user IC which yields the highest sum-rate. Without loss of generality, suppose that the 2-user IC with maximum TIN sum-rate is the one in which Tx$i_1$ and Tx$i_2$ want to send messages $W_{j_1i_1}$ and $W_{j_2i_2}$ to Rx$j_1$ and Rx$j_2$, respectively. The transmitters encode their message into a codeword with power $\rho$. This causes interference at undesired receivers. Therefore, the receivers decode their desired messages using TIN. Using this scheme, the following sum-rate is achievable
\begin{align*}
R_{j_1i_1} + R_{j_2i_2} = C\left(\frac{\rho^{\alpha_{j_1i_1}}}{1+\rho^{\alpha_{j_1i_2}}}\right)+ C\left(\frac{\rho^{\alpha_{j_2i_2}}}{1+\rho^{\alpha_{j_2i_1}}}\right) .
\end{align*}
In general, the achievable sum-rate by using TDMA-TIN is presented
in the following proposition.

\begin{Prop}
\label{Pro:Ach_TDMA-TIN_Gaussian_32X}
The achievable sum-rate of TDMA-TIN in the $3\times 2$ Gaussian X channel is given by
\begin{align}
\label{eq:GausAchTDMATIN}
R_{TT}  =  &\max_{\boldsymbol{p}_{TT}} R_{TT}(\boldsymbol{p}_{TT})
\end{align}
where $\boldsymbol{p}_{TT}=(i_1,i_2,j_1,j_2)$ with $i_1,i_2 \in \{1,2,3\}$, $j_1,j_2\in\{1,2\}$, $i_1\neq i_2$, $j_1\neq j_2$, and where $$R_{TT}(\boldsymbol{p}_{TT})=C\left(\frac{\rho^{\alpha_{j_1i_1}}}{1+\rho^{\alpha_{j_1i_2}}}\right) + C\left(\frac{\rho^{\alpha_{j_2i_2}}}{1+\rho^{\alpha_{j_2i_1}}}\right).$$
\end{Prop}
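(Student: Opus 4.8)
The plan is to assemble the proposition from the three ingredients already laid out before the statement: the per-IC TIN sum-rate, the exhaustive parametrization of the six underlying two-user IC's, and the linear-programming reduction of the time-sharing optimization. First I would fix an arbitrary admissible configuration $\boldsymbol{p}_{TT}=(i_1,i_2,j_1,j_2)$ --- two distinct active transmitters Tx$i_1$, Tx$i_2$ paired with the two distinct receivers Rx$j_1$, Rx$j_2$ --- with the remaining transmitter switched off and each active transmitter encoding only the single message destined for its paired receiver. Using independent Gaussian codebooks at power $\rho$, the signal at Rx$j_1$ is $h_{j_1 i_1} x_{i_1}+h_{j_1 i_2} x_{i_2}+z_{j_1}$; since $x_{i_2}$ is Gaussian, the interference-plus-noise term is itself Gaussian and independent of $x_{i_1}$, so a single-user decoder treating it as noise achieves $I(x_{i_1};y_{j_1})=C\!\left(\frac{\rho|h_{j_1 i_1}|^2}{1+\rho|h_{j_1 i_2}|^2}\right)$ with equality. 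The identity $\rho|h_{ji}|^2=\rho^{\alpha_{ji}}$, immediate from the definition \eqref{eq:def_alpha} of $\alpha_{ji}$, rewrites this as $C\!\left(\frac{\rho^{\alpha_{j_1 i_1}}}{1+\rho^{\alpha_{j_1 i_2}}}\right)$, and the symmetric computation at Rx$j_2$ supplies the second term of $R_{TT}(\boldsymbol{p}_{TT})$.

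Next I would confirm that ranging $\boldsymbol{p}_{TT}$ over all tuples with $i_1\neq i_2$ and $j_1\neq j_2$ enumerates exactly the six IC's. Selecting two of the three transmitters and pairing each with a distinct receiver yields $3\cdot 2$ ordered transmitter pairs times the two receiver assignments $(j_1,j_2)\in\{(1,2),(2,1)\}$, i.e., twelve tuples; the relabeling $(i_1,i_2,j_1,j_2)\mapsto(i_2,i_1,j_2,j_1)$ is a fixed-point-free involution that leaves both the physical IC and the value $R_{TT}(\boldsymbol{p}_{TT})$ unchanged, so the twelve tuples collapse onto the six distinct IC's counted in the text. Hence $\max_{\boldsymbol{p}_{TT}} R_{TT}(\boldsymbol{p}_{TT})$ equals $\max_s R_s$ over the six per-IC TIN sum-rates $R_s$.

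Finally I would close the loop with the time-sharing argument already stated: the TDMA sum-rate $\max_{\tau_1,\ldots,\tau_6}\sum_s \tau_s R_s$ is a linear program over the simplex $\sum_s \tau_s=1$, $\tau_s\ge 0$, whose optimum is attained at a vertex, i.e., by devoting all the time to the single best IC. This identifies $\max_\tau \sum_s \tau_s R_s$ with $\max_s R_s=\max_{\boldsymbol{p}_{TT}} R_{TT}(\boldsymbol{p}_{TT})$, which is the claim.

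I expect no serious obstacle, since the statement essentially formalizes the scheme described above the proposition. The only points deserving a word of care are (i) observing that Gaussian inputs render the treated interference exactly Gaussian, so the single-user TIN rate holds with equality rather than merely as a lower bound, and (ii) checking that the two-to-one relabeling symmetry is the sole redundancy in the parametrization, so that the maximization neither over- nor under-counts the IC's.
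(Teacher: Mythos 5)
Your proposal is correct and follows essentially the same route as the paper: the paper's justification of Proposition~\ref{Pro:Ach_TDMA-TIN_Gaussian_32X} is precisely the scheme description preceding it --- decomposition into the six underlying two-user IC's, TIN with full-power Gaussian codebooks in each, and the observation that the linear time-sharing program is optimized by setting $\tau_s=1$ for the single best IC. The additional details you supply (the explicit mutual-information computation showing the TIN rate is met with equality under Gaussian inputs, and the two-to-one counting of the tuples $\boldsymbol{p}_{TT}$ onto the six IC's) merely flesh out steps the paper leaves implicit.
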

Let us transform this achievable rate expression to an achievable GDoF expression. We first bound $R_{TT}(\boldsymbol{p}_{TT})$ as follows
\begin{align*}
&R_{TT}(\boldsymbol{p}_{TT})\\
&>  \left[(\alpha_{j_1i_1}-\alpha_{j_1i_2})^+
+(\alpha_{j_2i_2}-\alpha_{j_2i_1})^+\right]
\log_2(\rho)-2.
\end{align*}
Therefore, for this particular set of transmitters and receivers,  TDMA-TIN achieves a GDoF of 
\begin{align*}
D_{TT}(\boldsymbol{p}_{TT})=(\alpha_{j_1i_1}-\alpha_{j_1i_2})^+
+(\alpha_{j_2i_2}-\alpha_{j_2i_1})^+.
\end{align*}
As a result, TDMA-TIN achieves the following GDoF
\begin{align}
\label{eq:GDoFTT}
d_{TT}(\boldsymbol{\alpha}) =  &\max_{\boldsymbol{p}_{TT}} D_{TT}(\boldsymbol{p}_{TT}).
\end{align}
Despite the simplicity of TDMA-TIN, this scheme is constant-gap optimal in some cases as we shall see next.

\section{Constant-gap optimality of TDMA-TIN}
Here, we want to introduce a noisy interference regime in which TDMA-TIN achieves the GDoF of the $3\times 2$ Gaussian X channel, and \textcolor{comm_Ay}{moreover}, achieves its sum-capacity within a constant gap. The following theorem characterizes the GDoF of the channel in such a noisy interference regime.
\begin{Theorem}
\label{Th:TIN_con_gap_32X_Gauss}
If there exist distinct $i_1,i_2,i_3\in\{1,2,3\}$ and distinct $j_1,j_2\in\{1,2\}$ such that the following noisy interference regime conditions are satisfied:
\begin{align}
\label{eq:condition_Opt_TDMA_TIN_Gauss1}
\alpha_{j_1i_1}-\alpha_{j_2i_1}&\geq \psi\\
\label{eq:condition_Opt_TDMA_TIN_Gauss2}
\alpha_{j_2i_2}-\alpha_{j_1i_2}&\geq \max\{\alpha_{j_2i_1},\alpha_{j_2i_3}\},
\end{align}
where $\psi=\max\{\alpha_{j_1i_3}-(\alpha_{j_2i_3}-\alpha_{j_2i_1})^+,\alpha_{j_1i_2}\}$, then TDMA-TIN achieves the GDoF of the $3\times2$ Gaussian X channel given by
\begin{align}
\label{Thm:GDoFExp}
d_\Sigma(\boldsymbol{\alpha})&\leq \alpha_{j_1i_1}-\alpha_{j_2i_1}+\alpha_{j_2i_2}-\alpha_{j_1i_2}.
\end{align}
\end{Theorem}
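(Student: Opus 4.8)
The statement bundles two inequalities: an achievability (lower) bound and a converse (upper) bound on $d_\Sigma(\boldsymbol{\alpha})$, and showing they coincide is what makes TDMA-TIN GDoF-optimal, with the displayed \eqref{Thm:GDoFExp} being the converse. The plan is to dispatch achievability immediately from Proposition~\ref{Pro:Ach_TDMA-TIN_Gaussian_32X} and then to put all the effort into a new genie-aided sum-capacity upper bound, which is the real content of the theorem.

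For achievability I would evaluate the GDoF of \eqref{eq:GDoFTT} at the single choice $\boldsymbol{p}_{TT}=(i_1,i_2,j_1,j_2)$ and use the regime conditions to strip the positive parts. Condition \eqref{eq:condition_Opt_TDMA_TIN_Gauss1} gives $\alpha_{j_1i_1}-\alpha_{j_2i_1}\ge\psi\ge\alpha_{j_1i_2}$, so with $\alpha_{j_2i_1}>0$ one gets $\alpha_{j_1i_1}>\alpha_{j_1i_2}$; condition \eqref{eq:condition_Opt_TDMA_TIN_Gauss2} gives $\alpha_{j_2i_2}-\alpha_{j_1i_2}\ge\alpha_{j_2i_1}>0$, hence $\alpha_{j_2i_2}>\alpha_{j_2i_1}$. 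Therefore $D_{TT}(\boldsymbol{p}_{TT})=\alpha_{j_1i_1}-\alpha_{j_1i_2}+\alpha_{j_2i_2}-\alpha_{j_2i_1}$, which equals the right-hand side of \eqref{Thm:GDoFExp}; since $d_\Sigma\ge d_{TT}\ge D_{TT}(\boldsymbol{p}_{TT})$ the lower bound follows, and notably $i_3$ plays no role here.

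For the converse I would build an Etkin--Tse--Wang-style genie that cross-couples the two receivers. Starting from Fano's inequality I bound $nR_\Sigma$ by $I(W_{j_1 1},W_{j_1 2},W_{j_1 3};y_{j_1}^n,G_{j_1})+I(W_{j_2 1},W_{j_2 2},W_{j_2 3};y_{j_2}^n,G_{j_2})+n\epsilon_n$, where the genie handed to Rx$j_1$ is the signal $h_{j_2i_1}x_{i_1}^n+z_{j_2}^n$ that Tx$i_1$ produces at Rx$j_2$, and symmetrically the genie handed to Rx$j_2$ is $h_{j_1i_2}x_{i_2}^n+z_{j_1}^n$. Expanding each mutual information as a difference of differential entropies and conditioning on the relevant messages, the ``desired-signal'' entropy of Tx$i_1$ at Rx$j_1$ is charged the ``interference'' entropy it creates at Rx$j_2$, leaving a net contribution governed by $\alpha_{j_1i_1}-\alpha_{j_2i_1}$; the Tx$i_2$ term behaves symmetrically and yields $\alpha_{j_2i_2}-\alpha_{j_1i_2}$. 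I would then pass to GDoF by dividing by $\log_2(\rho)$ and letting $\rho\to\infty$, using $C(\rho^{a}/(1+\rho^{b}))=(a-b)^+\log_2(\rho)+O(1)$ to read off the scaling of each entropy.

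The hard part is two-fold, and it is exactly where the regime conditions get consumed. First, the genie and the entropy bookkeeping must be arranged so that the noise and cross-entropy terms cancel cleanly rather than leaving residual positive GDoF; this delicate step is the ``new upper bound'' advertised in the abstract. Second, and most importantly, the third transmitter must be shown to carry no net GDoF: its messages $W_{j_1i_3},W_{j_2i_3}$ reach the receivers at levels $\alpha_{j_1i_3},\alpha_{j_2i_3}$, and I would bound their joint contribution, together with that of the ``wrong-direction'' components, so that it is dominated. Here condition \eqref{eq:condition_Opt_TDMA_TIN_Gauss1}, through $\psi\ge\alpha_{j_1i_3}-(\alpha_{j_2i_3}-\alpha_{j_2i_1})^+$, and condition \eqref{eq:condition_Opt_TDMA_TIN_Gauss2}, through the $\alpha_{j_2i_3}$ in its maximum, are precisely what force the Tx$i_3$ and cross terms below the target, so the surviving bound is $\alpha_{j_1i_1}-\alpha_{j_2i_1}+\alpha_{j_2i_2}-\alpha_{j_1i_2}$. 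Matching this with the achievability value closes the proof; I expect the genie design and the verification that $i_3$ contributes nothing to be the principal obstacles, while the GDoF limit itself is routine.
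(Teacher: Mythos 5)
Your achievability argument is fine and matches the paper's: evaluate $D_{TT}$ at $\boldsymbol{p}_{TT}=(i_1,i_2,j_1,j_2)$ and use \eqref{eq:condition_Opt_TDMA_TIN_Gauss1}--\eqref{eq:condition_Opt_TDMA_TIN_Gauss2} to drop the positive parts. The genuine gap is in the converse, and it is not merely an omitted computation: the genie you propose cannot prove the theorem. Handing Rx$j_1$ the plain interference signal $h_{j_2i_1}x_{i_1}^n+z_{j_2}^n$ (and symmetrically for Rx$j_2$) is the standard ETW-type genie, essentially that of \cite{GengSunJafar2014}. Two problems follow. First, a repairable one: in an X channel each $x_i^n$ carries messages for \emph{both} receivers, so $h(y_j^n\mid \tilde{W}_j,G_j)$ does not collapse into an interference entropy unless you also hand over message side information (the paper gives $V_{j_1}=\{W_{j_2i_1},W_{j_2i_3}\}$ and $V_{j_2}=W_{j_1i_2}$). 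Second, the fatal one: your genie never touches the signal of Tx$i_3$, so that signal enters the receiver-$j_1$ term at its full level $\alpha_{j_1i_3}$, and the best GDoF bound this route can yield is
\begin{align*}
\max\{\alpha_{j_2i_1},\alpha_{j_2i_3},\alpha_{j_2i_2}-\alpha_{j_1i_2}\}+\max\{\alpha_{j_1i_2},\alpha_{j_1i_3},\alpha_{j_1i_1}-\alpha_{j_2i_1}\},
\end{align*}
i.e., the old regime with $\psi'=\max\{\alpha_{j_1i_3},\alpha_{j_1i_2}\}$. The theorem's hypotheses only guarantee $\alpha_{j_1i_1}-\alpha_{j_2i_1}\geq\alpha_{j_1i_3}-(\alpha_{j_2i_3}-\alpha_{j_2i_1})^+$, so in the extended part of the regime this bound strictly exceeds the target. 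Concretely, take $\alpha_{j_1i_1}=\alpha_{j_2i_2}=1$, $\alpha_{j_1i_3}=\alpha_{j_2i_3}=0.6$, $\alpha_{j_2i_1}=0.5$, $\alpha_{j_1i_2}=0.1$: conditions \eqref{eq:condition_Opt_TDMA_TIN_Gauss1}--\eqref{eq:condition_Opt_TDMA_TIN_Gauss2} hold and the target is $1.4$, but the bound above gives $0.9+0.6=1.5$. The regime conditions cannot ``force the Tx$i_3$ terms below the target'' after the fact, as you hope; the reduced term $\alpha_{j_1i_3}-(\alpha_{j_2i_3}-\alpha_{j_2i_1})^+$ must already be produced by the upper bound itself.

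What is missing is exactly the paper's novelty. The genie handed to Rx$j_1$ is not an interference signal but a \emph{scaled} combination $S_{j_1}^n=c\,(h_{j_1i_1}X_{i_1}^n+d\,h_{j_1i_3}X_{i_3}^n)+N_1^n$ with fresh noise, where $(c,d)$ switches between three cases as in \eqref{eq:c_param}: whenever $\alpha_{j_2i_3}>\alpha_{j_2i_1}$, the genie deliberately contains Tx$i_3$'s codeword ($d=1$) with a gain tuned so that, after conditioning on the genie, Tx$i_3$'s contribution to $y_{j_1}^n$ is suppressed down to level $\alpha_{j_1i_3}-(\alpha_{j_2i_3}-\alpha_{j_2i_1})^+$. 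The price of this mismatched genie (its entropy no longer cancels exactly against the interference entropy $h(\tilde{S}_1^n|V_1)$ subtracted at Rx$j_2$) is controlled by a separate entropy-difference lemma, Lemma~\ref{Lemma:Entropy_diff_Gaussian}, which shows $h(S_1^n|V_1)-h(\tilde{S}_1^n|V_1)\leq n$ under conditions \eqref{eq:ExtCondition3} that the choice of $c$ is engineered to satisfy; meanwhile the genie for Rx$j_2$ reuses Rx$j_1$'s own noise $Z_1^n$ so that its entropy cancels exactly. These three ingredients --- message side information, the case-dependent scaled genie including Tx$i_3$'s signal, and the entropy-difference lemma bounding the genie's cost by $n$ bits --- are what produce the improved term in \eqref{GDoFUpperBound}; without them your outline lands on the regime of \cite{GengSunJafar2014} and the converse of the theorem is out of reach.
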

In other words, if there exists a permutation of the transmitters and receivers such that conditions \eqref{eq:condition_Opt_TDMA_TIN_Gauss1} and \eqref{eq:condition_Opt_TDMA_TIN_Gauss2} hold, then TDMA-TIN is GDoF-optimal. Using this theorem, we can show that TDMA-TIN achieves the sum-capacity of the channel within a constant gap. It can be shown that this gap can be upper bounded by 7 bits as long as the conditions in \eqref{eq:condition_Opt_TDMA_TIN_Gauss1} and  \eqref{eq:condition_Opt_TDMA_TIN_Gauss2} are satisfied. Due to space limitations, the gap analysis is not included.

Note that conditions \eqref{eq:condition_Opt_TDMA_TIN_Gauss1} and \eqref{eq:condition_Opt_TDMA_TIN_Gauss2} specify a larger noisy interference regime than that identified in \cite{GengSunJafar2014}. This is true since $\psi=\max\{\alpha_{j_1i_3}-(\alpha_{j_2i_3}-\alpha_{j_2i_1})^+,\alpha_{j_1i_2}\}$ is smaller than $\max\{\alpha_{j_1i_3},\alpha_{j_1i_2}\}$ as identified in \cite{GengSunJafar2014}, specifically, if $\alpha_{j_2i_3}-\alpha_{j_2i_1}>0$. \textcolor{Comm_An}{In Fig.~\ref{fig:GDoF_opt_TIN}, the GDoF optimality regime of TDMA-TIN is illustrated for $3\times 2$ X channel with $\alpha_{11} = \alpha_{22}=1$, $\alpha_{13}= \alpha_{23}=\beta$, where $\beta$ is larger than $0.5$ and smaller than $1$.
The noisy interference regime obtained from Theorem \ref{Th:TIN_con_gap_32X_Gauss} is given by the union of the rectangle defined by $(\alpha_{21},\alpha_{12})\in[0, 0.5]\times[0,1-\beta]$ and the rectangle defined by $(\alpha_{21},\alpha_{12})\in[0,1-\beta]\times[0, 0.5]$. On the other hand, the noisy interference regime obtained from \cite{GengSunJafar2014} is given by the intersection of these two rectangles. 
Obviously, the new noisy interference regime does not only subsume the previously known regime from \cite{GengSunJafar2014} but also extends it. Note that if the channel gains from Tx$i_3$ to the receivers decrease, then the intersection of the two rectangles increases. At the point $\beta=1/2$, both regimes will coincide and the noisy interference regime becomes a rectangle with width and height of $1/2$.}
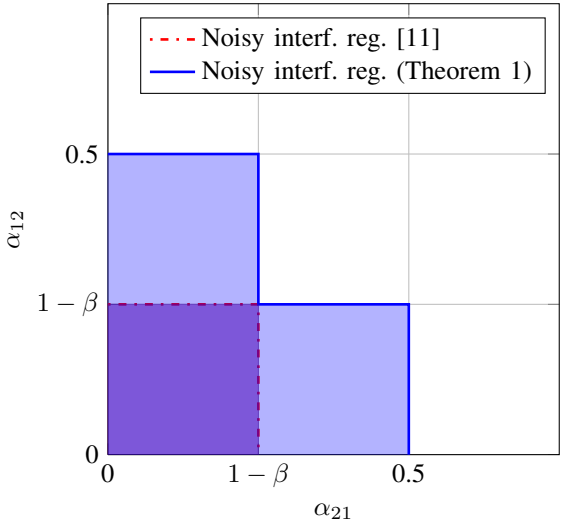
\begin{figure}
\begin{tikzpicture}

\begin{axis}[%
width=6cm,
height=6cm,
scale only axis,
xmin=0,
xmax=0.75,
xtick={0,0.25,0.5},
xticklabels={{0},{$1-\beta$},{0.5}},
xlabel={$\alpha_{21}$},
xmajorgrids,
ymin=0,
ymax=0.75,
ytick={0,0.25,0.5},
yticklabels={{0},{$1-\beta$},{0.5}},
ylabel={$\alpha_{12}$},
ymajorgrids,
legend style={at={(0.97,0.8)},anchor=south east,draw=black,fill=white,legend cell align=left}
]
\addplot [color=red,dash pattern=on 1pt off 3pt on 3pt off 3pt,line width=1pt]
  table[row sep=crcr]{0	0.25\\
0.25	0.25\\
0.25	 0\\
};
\addlegendentry{Noisy interf. reg. \cite{GengSunJafar2014}};

\addplot [color=blue,solid,line width=1.0pt]
  table[row sep=crcr]{0	0.5\\
0.25 	0.5\\
0.25	 0.25\\
0.5 0.25\\
0.5 0.0\\
};
\addlegendentry{Noisy interf. reg. (Theorem \ref{Th:TIN_con_gap_32X_Gauss})};

\addplot[area legend,solid,fill=red,opacity=0.3,draw=red,forget plot]
table[row sep=crcr] {%
x	y\\
0	0\\
0	0.25\\
0.25	 0.25\\
0.25	 0\\
};

\addplot[area legend,solid,fill=blue,opacity=0.3,draw=blue,forget plot]
table[row sep=crcr] {%
x	y\\
0	0\\
0	0.5\\
0.25	0.5\\
0.25	0\\
};
\addplot[area legend,solid,fill=blue,opacity=0.3,draw=blue,forget plot]
table[row sep=crcr] {%
x	y\\
0	0\\
0.5	0\\
0.5	0.25\\
0	0.25\\
};

\end{axis}
\end{tikzpicture}%
\caption{The GDoF optimality regime of TDMA-TIN for $3\times 2$ X channel with $\alpha_{11} = \alpha_{22}=1$, and $\alpha_{13}=\alpha_{23}=\beta$, where $0.5<\beta<1$.}
\label{fig:GDoF_opt_TIN}
\end{figure}

The GDoF expression given in Theorem \ref{Th:TIN_con_gap_32X_Gauss} is clearly achievable by TDMA-TIN. Namely, consider a permutation of transmitters and receivers given by distinct $i_1,i_2,i_3\in\{1,2,3\}$ and distinct $j_1,j_2\in\{1,2\}$. Then, \eqref{eq:GDoFTT} leads to
\begin{align}
d_{TT}(\boldsymbol{\alpha}) &=  \max_{\boldsymbol{p}_{TT}} D_{TT}(\boldsymbol{p}_{TT}) >  D_{TT}(i_1,i_2,j_1,j_2)\\
&= \alpha_{j_1i_1}-\alpha_{j_1i_2}+\alpha_{j_2i_2}
-\alpha_{j_2i_1}
\end{align}
where the last step follows since the conditions in Theorem \ref{Th:TIN_con_gap_32X_Gauss} dictate that $\alpha_{j_1j_1}\geq\alpha_{j_1i_2}$ and $\alpha_{j_2j_2}\geq\alpha_{j_2i_1}$. This achievable GDoF coincides with \eqref{Thm:GDoFExp}. 

To prove Theorem \ref{Th:TIN_con_gap_32X_Gauss}, it remains to prove the converse. In other words, we still need to establish an upper bound on the GDoF which coincides with \eqref{Thm:GDoFExp} under the conditions \eqref{eq:condition_Opt_TDMA_TIN_Gauss1} and \eqref{eq:condition_Opt_TDMA_TIN_Gauss2}. The converse is provided in the next section.

\section{Converse \textcolor{comm_Ay}{for Theorem \ref{Th:TIN_con_gap_32X_Gauss}}}
Here, we derive an upper bound on the sum-capacity of the $3\times 2$ Gaussian X channel which proves the converse of Theorem \ref{Th:TIN_con_gap_32X_Gauss}. The upper bound is given in the following lemma.
\begin{Lemma}
\label{Th:UB_Det_32X}
The sum-capacity of the $3\times 2$ Gaussian X channel is upper bounded by
\begin{align}
\label{eq:UB_32X}
C_\Sigma\leq &\min_{\boldsymbol{p}} B(\boldsymbol{p})
\end{align}
where $\boldsymbol{p}=(i_1,i_2,i_3,j_1,j_2)$ for distinct $i_1,i_2,i_3 \in\{1,2,3\}$ and distinct $j_1,j_2\in\{1,2\}$, and where $B(\boldsymbol{p})$ is as given in \eqref{eq:Bij} \textcolor{comm_Ay}{on the top of the next page}.
\begin{figure*}
\begin{align}
\label{eq:Bij}
&B(\boldsymbol{p})= C\left(\rho^{\alpha_{j_1i_2}} + \bar{d} \rho^{\alpha_{j_1i_3}}+ \frac{\rho^{\alpha_{j_1i_1}} + d \rho^{\alpha_{j_1i_3}}}{1+c^2(\rho^{\alpha_{j_1i_1}} +d \rho^{\alpha_{j_1i_3}})}\right) + C\left(\rho^{\alpha_{j_2i_1}} + \rho^{\alpha_{j_2i_3}} +\frac{\rho^{\alpha_{j_2i_2}} }{1+\rho^{\alpha_{j_1i_2}}}\right)+1\\
&(c^2,d) = \begin{cases}
(\rho^{\alpha_{j_2i_1}-\alpha_{j_1i_1}}, 0)& \text{if}\quad  \alpha_{j_2i_3}\leq \alpha_{j_2i_1},
\\
(\rho^{\alpha_{j_2i_1}-\alpha_{j_1i_1}}, 1)& \text{if}\quad  \alpha_{j_2i_3}> \alpha_{j_2i_1}\text{ and } 
\alpha_{j_2i_1} - \alpha_{j_1i_1} \leq \alpha_{j_2i_3}-\alpha_{j_1i_3} - \alpha_{j_2i_1}, \\
(\rho^{\alpha_{j_2i_3}-\alpha_{j_2i_1}-\alpha_{j_1i_3}},1)& \text{otherwise.} \end{cases}
\label{eq:param_def_c_d}
\end{align}
\hrule
\end{figure*}
\end{Lemma}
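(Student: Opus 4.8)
The plan is to establish the bound by combining Fano's inequality with a genie-aided argument that couples the two receivers. First I would invoke Fano's inequality to write $n(R_\Sigma-\epsilon_n)\le I(\mathcal{W}_{j_1};y_{j_1}^n)+I(\mathcal{W}_{j_2};y_{j_2}^n)$, where $\mathcal{W}_j=\{W_{j1},W_{j2},W_{j3}\}$ denotes the three messages intended for Rx$j$. Since supplying side information can only increase mutual information, I would then hand each receiver a genie signal: to Rx$j_1$ a noisy version of $x_{i_1}^n$ (and, when $d=1$, also of $x_{i_3}^n$) scaled by $c$ so as to match the cross-link strength $\alpha_{j_2i_1}$ toward Rx$j_2$, and to Rx$j_2$ a noisy version of $x_{i_2}^n$ scaled to the link strength $\alpha_{j_1i_2}$ toward Rx$j_1$. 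These scalings are precisely the source of the factors $c^2$ and $1+\rho^{\alpha_{j_1i_2}}$ appearing in \eqref{eq:Bij}.

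The second step is to expand both enhanced mutual informations into differential entropies and exploit the coupling induced by the genies. The genie given to Rx$j_1$ is, by construction, a reconstruction of the interference that Tx$i_1$ (and conditionally Tx$i_3$) inflicts on Rx$j_2$; hence, once Rx$j_1$ has decoded its own messages $\mathcal{W}_{j_1}$, the conditional entropy of this genie cancels the corresponding interference term in the entropy seen at Rx$j_2$, and symmetrically for the $i_2$-genie handed to Rx$j_2$. Carrying out the chain-rule bookkeeping, together with the standard difference-of-entropies manipulations of the Etkin--Tse--Wang type, collapses the cross terms and leaves the signals $x_{i_1}$ and $x_{i_2}$ contributing only through the reduced SINR expressions $\tfrac{\rho^{\alpha_{j_1i_1}}+d\rho^{\alpha_{j_1i_3}}}{1+c^2(\rho^{\alpha_{j_1i_1}}+d\rho^{\alpha_{j_1i_3}})}$ and $\tfrac{\rho^{\alpha_{j_2i_2}}}{1+\rho^{\alpha_{j_1i_2}}}$, while the remaining signals enter at full strength. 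I would then upper bound each surviving positive differential entropy by the corresponding Gaussian entropy, using that i.i.d.\ Gaussian inputs maximize entropy under the power constraint $\rho$; this produces the two $C(\cdot)$ terms of \eqref{eq:Bij}, with the residual constant absorbed into the $+1$.

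The definition of $(c^2,d)$ in \eqref{eq:param_def_c_d} is dictated by a case analysis on the ordering of the cross strengths $\alpha_{j_2i_1}$, $\alpha_{j_2i_3}$, and $\alpha_{j_1i_3}$, which governs whether the $i_3$-signal must be folded into the genie ($d=1$) and how strongly it must be scaled. When $\alpha_{j_2i_3}\le\alpha_{j_2i_1}$ the $i_3$-interference at Rx$j_2$ is dominated by that of $i_1$ and may be left at full strength ($d=0$); the remaining two cases set $d=1$ and pick $c^2$ so that the genie stays a \emph{useful}, degradable side information in that regime. Finally, since $\boldsymbol{p}$ ranges over all admissible permutations and each choice yields a valid upper bound, I would minimize over $\boldsymbol{p}$ to arrive at \eqref{eq:UB_32X}. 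I expect the main obstacle to be the genie design itself: one must fix the genie noise variances and the scaling $c$ so that the difference-of-entropies (negative entropy) steps are simultaneously valid in all three regimes of \eqref{eq:param_def_c_d} and so that the cross terms telescope to exactly the SINR forms above; checking these entropy inequalities, especially at the boundaries separating the three cases, is the delicate part of the argument.
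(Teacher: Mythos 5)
Your proposal follows the same skeleton as the paper's proof---Fano's inequality, genie-aided enhancement, cancellation of the genie's entropy against the interference entropy at the opposite receiver, Gaussian maximization, the case analysis behind \eqref{eq:param_def_c_d}, and a final minimization over permutations---but two load-bearing ingredients are missing. The first is \emph{message} side information. The paper does not hand the receivers only noisy signals: it also gives $V_{j_1}=\{W_{j_2i_1},W_{j_2i_3}\}$ to Rx$j_1$ and $V_{j_2}=W_{j_1i_2}$ to Rx$j_2$. This is what makes the inputs reconstructible---$(X_{i_1}^n,X_{i_3}^n)$ is a function of $(\tilde{W}_{j_1},V_{j_1})$ and $X_{i_2}^n$ is a function of $(\tilde{W}_{j_2},V_{j_2})$---which yields the identity $h(Y_j^n\mid \tilde{W}_j,V_j)=h(\tilde{S}_i^n\mid V_i)$, $i\neq j$, on which the whole cancellation rests. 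In your version, where each receiver gets only a noisy signal genie, the step ``once Rx$j_1$ has decoded its own messages, the conditional entropy of this genie cancels the corresponding interference term at Rx$j_2$'' fails: in the X channel, unlike the 2-user IC, the input $x_{i_2}^n$ also carries the message $W_{j_1i_2}$ intended for the \emph{other} receiver, so $h(y_{j_2}^n\mid\mathcal{W}_{j_2},\text{genie})$ does not collapse to the entropy of the interference from Tx$i_1$ and Tx$i_3$. The Etkin--Tse--Wang bookkeeping you invoke therefore does not go through verbatim; it needs the cross messages as additional genie.

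The second gap is that you defer precisely the step that constitutes the paper's technical novelty. When $d=1$, the genie $S_{j_1}^n=c\left(h_{j_1i_1}X_{i_1}^n+h_{j_1i_3}X_{i_3}^n\right)+N_{j_1}^n$ is \emph{not} a scaled copy of the true interference $\tilde{S}_{j_1}^n=h_{j_2i_1}X_{i_1}^n+h_{j_2i_3}X_{i_3}^n+Z_{j_2}^n$---the two weight $X_{i_1}^n$ and $X_{i_3}^n$ differently---so the required inequality $h(S_{j_1}^n\mid V_{j_1})-h(\tilde{S}_{j_1}^n\mid V_{j_1})\leq n$ is not a routine difference-of-entropies manipulation. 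It is Lemma~\ref{Lemma:Entropy_diff_Gaussian} (Appendix~\ref{Proof_Gaus_UB_32X}), proved by a chain of mutual-information comparisons that are valid only under conditions \eqref{eq:ExtCondition3}, and the three branches of \eqref{eq:c_param} are exactly the choices of $c$ that enforce those conditions. Flagging this as ``the delicate part'' identifies the obstacle but does not overcome it: without that lemma (or an equivalent), and even after repairing the first gap, your argument is complete only in the aligned case $d=0$, where the genie matches the true interference exactly; the resulting GDoF bound would then reproduce essentially the previously known noisy-interference regime of \cite{GengSunJafar2014} rather than the extended bound \eqref{eq:Bij} that the mismatched genie and Lemma~\ref{Lemma:Entropy_diff_Gaussian} are designed to deliver.
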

\begin{proof}
We consider the following permutation of transmitters and receivers: $\boldsymbol{p}=(i_1,i_2,i_3,j_1,j_2)=(1,2,3,1,2)$; the other cases can be proved similarly. 
We give \textcolor{comm_Ay}{$V_{j_1}=V_1 = \{W_{21},W_{23}\}$} and $S_{j_1}^n=S_1^n=c(h_{11}X_1^n+ d\cdot h_{13}X_3^n)+N_1^n$ to Rx1 as side information,\footnote{The capital letter notation is used for random variables.} where\textcolor{comm_Ay}{
\begin{align}
(c,d) = 
\begin{cases}
(\frac{h_{21}}{h_{11}},0)                 & \frac{|h_{23}|}{|h_{21}|}\leq 1 \\
(\frac{h_{21}}{h_{11}},1)                 & \frac{|h_{23}|}{|h_{21}|}> 1,\,  \frac{\rho|h_{21}|^4}{|h_{11}|^2}\leq \frac{|h_{23}|^2}{|h_{13}|^2}\\
(\frac{h_{23}}{h_{21}\sqrt{\rho}h_{13}},1) & \frac{|h_{23}|}{|h_{21}|}> 1,\, \frac{\rho|h_{21}|^4}{|h_{11}|^2}> \frac{|h_{23}|^2}{|h_{13}|^2},
\end{cases}\label{eq:c_param}
\end{align}}
and where $N_1^n$ is a zero-mean unit-variance Gaussian noise independent of $Z_1$ and $Z_2$ and i.i.d. over time. 


We also give $V_{j_2}=V_2=W_{12}$ and $S_{j_2}^n=S_2^n = h_{12}X_2^n+N_2^n$ to Rx2 as side information, where $N_2^n=Z_1^n$. 
Using Fano's inequality, and defining $\tilde{W}_j$ as the set of messages desired at Rx$j$, i.e. $\{W_{j1},W_{j2},W_{j3}\}$, we obtain 
\begin{align}
nR_\Sigma\leq & \sum_{j=1}^2 I(\tilde{W}_j;Y_j^n,S_j^n,V_j)+n\epsilon_n, \notag 
\end{align}
where $\epsilon_n\to 0$ as $n\to\infty$. Then, using the chain rule, and since all messages are independent, we can write 
\begin{align*}
nR_\Sigma&\leq  \sum_{j=1}^2 \left[I(\tilde{W}_j;S_j^n|V_j) + I(\tilde{W}_j;Y_j^n|S_j^n,V_j)\right]+n\epsilon_n.
\end{align*}
Now by using $h(S_j^n|\tilde{W}_j,V_j)=h(N_j^n)$ we get
\begin{align}
nR_\Sigma&\leq
 \sum_{j=1}^2 \left[h(S_j^n|V_j) - h(N_j^n)+ h(Y_j^n|S_j^n,V_j)\right.\nonumber\\
&\qquad\qquad \left.- h(Y_j^n|\tilde{W}_j,V_j)\right]+n\epsilon_n\label{RsSum4}
\end{align}
Now, defining $\tilde{S}_1^n = h_{21}X_1^n + h_{23} X_3^n + Z_{2}^n$ and $\tilde{S}_2^n = h_{12}X_2^n + Z_{1}^n$ 
and using the fact that $(X_1^n,X_3^n)$ and $X_2^n$ can be reconstructed from $(\tilde{W_1},V_1)$ and $(\tilde{W_2},V_2)$, respectively, we obtain $h(Y_j^n|\tilde{W}_j,V_j)=h(\tilde{S}_i^n|V_i)$, where $i\neq j$, $i,j\in\{1,2\}$. Furthermore, since conditioning does not increase entropy, we have $h(\tilde{S}_1^n|V_1)\geq h(\tilde{S}_1^n|V_1,\bar{d}X_3^n)$. Substituting in \eqref{RsSum4} yields
\begin{align}
nR_\Sigma&\leq \sum_{j=1}^2\left[h(S_j^n|V_j) - h(N_j^n)+ h(Y_j^n|S_j^n,V_j)\right]  \nonumber\\
&\quad - h(\tilde{S}_{2}^n|V_{2}) - h(\tilde{S}_{1}^n|V_{1},\bar{d} X_3^n)+n\epsilon_n\notag \\ 
\label{Rsigm4t}
&\overset{(c)}{\leq} \sum_{j=1}^2 \left[h(Y_j^n|S_j^n,V_j)- h(N_j^n)\right]+ n +n\epsilon_n
\end{align}
where $(c)$ follows since $h(S_2^n|V_2) = h(\tilde{S}_2^n|V_2)$, and since  $h(S_1^n|V_1) = h(\tilde{S}_1^n|V_1,X_3^n)$ if $d=0$, and $h(S_1^n|V_1) - h(\tilde{S}_1^n|V_1)\leq n$ if $d=1$ as shown in Lemma~\ref{Lemma:Entropy_diff_Gaussian} in Appendix~\ref{Proof_Gaus_UB_32X}. By dropping the conditioning on $V_1$ and $V_2$, using Lemma~1 in~\cite{AnnapureddyVeeravalli} which shows that a circularly symmetric complex Gaussian distribution maximizes the
conditional differential entropy for a given covariance constraint,  
dividing by $n$, letting $n\to\infty$, and using \eqref{eq:def_alpha}, we obtain
\begin{align}
R_\Sigma
&\leq 
C\left(\rho^{\alpha_{12}} + \bar{d} \rho^{\alpha_{13}}+ \frac{\rho^{\alpha_{11}} + d \rho^{\alpha_{13}}}{1+c^2(\rho^{\alpha_{11}} +d  \rho^{\alpha_{13}})}\right) \notag\\ &\quad + C\left(\rho^{\alpha_{21}} + \rho^{\alpha_{23}} +\frac{\rho^{\alpha_{22}} }{1+\rho^{\alpha_{12}}}\right)+1 \label{eq:UB_32X_end_proof}
\end{align}
which is equal to the desired bound $B(\boldsymbol{p})$ \eqref{eq:Bij} for this specific permutation $\boldsymbol{p}=(i_1,i_2,i_3,j_1,j_2)=(1,2,3,1,2)$ of transmitters and receivers. By rewriting the parameters $c$, $d$ as a function of $\rho$, we obtain \eqref{eq:param_def_c_d} for this permutation.  
%
%
Writing the upper bound in \eqref{eq:UB_32X_end_proof} for all permutations of $i_1,i_2,i_3\in\{1,2,3\}$ and $j_1,j_2\in\{1,2\}$, we obtain the upper bound in \eqref{eq:UB_32X}.
\end{proof}

With this, we obtain a sum-capacity upper bound. We can use the definition of the GDoF in \eqref{GDoFdef} to write this upper bound as a GDoF upper bound. For this purpose, we divide $B(\boldsymbol{p})$ by $\log(\rho)$ and we let $\rho\to\infty$ to obtain a GDoF upper bound for each of the cases in \eqref{eq:param_def_c_d}. By combining the resulting GDoF upper bounds, we get the GDoF upper bound for a specific permutation $\boldsymbol{p}$ (details can be found in Appendix \ref{app:Constant Gap Analysis for TDMA-TIN})
\begin{align}
\label{GDoFUpperBound}
&D(\boldsymbol{p})<\max\{\alpha_{j_2i_1},\alpha_{j_2i_3},\alpha_{j_2i_2}-\alpha_{j_1i_2}\}\\
&+ \max\{\alpha_{j_1i_2},\alpha_{j_1i_1}-\alpha_{j_2i_1},\alpha_{j_1i_3}-(\alpha_{j_2i_3}-\alpha_{j_2i_1})^+\}
\nonumber.
\end{align}
Therefore $d_\Sigma(\boldsymbol{\alpha})\leq \min_{\boldsymbol{p}} D(\boldsymbol{p}).$ 
Now, we have a general GDoF upper bound. Let us specialize this bound to the noisy interference regime of Theorem \ref{Th:TIN_con_gap_32X_Gauss}. Suppose that the conditions in Theorem \ref{Th:TIN_con_gap_32X_Gauss} are satisfied for some permutation of transmitters and receivers $\hat{\boldsymbol{p}}=(t_1,t_2,t_3,r_1,r_2)$. By using these conditions we get 
\begin{align*}
d_\Sigma(\boldsymbol{\alpha})&\leq \min_{\boldsymbol{p}} D(\boldsymbol{p})\nonumber\\
&\leq D(\hat{\boldsymbol{p}})=\alpha_{r_1t_1}-\alpha_{r_2t_1}+\alpha_{r_2t_2}-\alpha_{r_1t_2},
\end{align*}
which proves the converse of Theorem \ref{Th:TIN_con_gap_32X_Gauss}.


\appendices

\section{}
\label{Proof_Gaus_UB_32X}
In this appendix, we introduce a lemma which is necessary for proving the bound \eqref{eq:UB_32X}. Let $W_A$ and $W_B$ be two independent messages, and let $X_A$ (independent of $W_B$) and $X_B$ (independent of $W_A$) be two independent complex-valued signals satisfying a power constraint $\rho$. Define $Y_A$ and $Y_B$ as noisy channel outputs given by
 \begin{align}
   Y_A & = h_1 X_A + h_2 X_B+Z_A\\
   Y_B & = h_3 X_A + h_4 X_B+Z_B,
   \end{align}
where $Z_A$ and $Z_B$ are zero-mean unit-variance Gaussian noises, and are independent of each other and of all other random variables, and where the constants $h_1$, $h_2$, $h_3$ and $h_4$ are complex-valued and satisfy 
\begin{align}
|h_1|^2  \leq  |h_3|^2  \leq \frac{|h_4|^2}{\rho |h_2|^2}\quad \text{and} \quad 1  < \rho |h_3|^2 \label{eq:ExtCondition3}.
\end{align} 
Let $Y_A^{n}$ and $Y_B^{n}$ be the outputs corresponding to inputs $X_A^{n}$ and $X_B^{n}$ of length $n$, and define $W_C=(W_A,W_B)$. Then,  we have the following lemma.
\begin{Lemma}
\label{Lemma:Entropy_diff_Gaussian}
If conditions \eqref{eq:ExtCondition3} are satisfied, then we have
\begin{align}
\label{lem2}
h(Y_A^n|W_C)-h(Y_B^n|W_C)\leq n.
\end{align}
\end{Lemma}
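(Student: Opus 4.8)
The plan is to reduce this two-input comparison to two single-input comparisons by a genie/chain-rule split, where the genie is the input $X_B^n$, whose interference footprint differs the most between the two outputs. Concretely, I would apply the chain rule $h(Y_j^n|W_C)=h(Y_j^n|W_C,X_B^n)+I(X_B^n;Y_j^n|W_C)$ for $j\in\{A,B\}$, and note that once $X_B^n$ and $W_C$ are given each output collapses to its $X_A$-only part (subtract the known $h_2X_B^n$, resp.\ $h_4X_B^n$), while the remaining signal is independent of $X_B^n$ given $W_C$. This yields the exact identity
\[
h(Y_A^n|W_C)-h(Y_B^n|W_C)=T_1+T_2,
\]
with $T_1=h(h_1X_A^n+Z_A^n|W_C)-h(h_3X_A^n+Z_B^n|W_C)$ and $T_2=I(X_B^n;Y_A^n|W_C)-I(X_B^n;Y_B^n|W_C)$. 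The goal then splits into showing $T_1\le 0$ and $T_2\le n$, which I expect to consume the two conditions in \eqref{eq:ExtCondition3} separately.

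For $T_1$ I would use only the left inequality $|h_1|^2\le|h_3|^2$. Since $Z_A$ and $Z_B$ both have unit variance, the scalar channel $X_A\mapsto h_1X_A+Z_A$ is a stochastically degraded version of $X_A\mapsto h_3X_A+Z_B$ (one can reach the former in law by scaling the latter by $h_1/h_3$ and adding an independent Gaussian of variance $1-|h_1/h_3|^2\ge 0$). Writing each differential entropy as $h(Z)+I(X_A^n;\cdot|W_C)$ and invoking the data-processing inequality then gives $T_1\le 0$ directly.

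The heart of the argument is $T_2\le n$, and this is where $\rho|h_2|^2|h_3|^2\le|h_4|^2$ enters. First I would upper bound the $Y_A$-term by handing $X_A^n$ to that receiver: because $X_A^n$ and $X_B^n$ are independent given $W_C$, one gets $I(X_B^n;Y_A^n|W_C)\le I(X_B^n;h_2X_B^n+Z_A^n|W_C)$, a clean channel of gain $h_2$ and unit noise. For the $Y_B$-term I would exploit scale invariance of mutual information and rescale $Y_B^n$ by $h_2/h_4$, obtaining $h_2X_B^n+\tilde\nu^n$ with $\tilde\nu=\tfrac{h_2}{h_4}(h_3X_A+Z_B)$ independent of $X_B$ given $W_C$. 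The pivotal computation is that the second condition, together with $\rho|h_3|^2>1$, forces $\mathrm{Var}(\tilde\nu)\le \tfrac{|h_2|^2}{|h_4|^2}(|h_3|^2\rho+1)\le 1+\tfrac{1}{\rho|h_3|^2}<2$ per symbol. Thus the two channels carry $X_B$ through the \emph{same} gain $h_2$, differing only by an additive noise whose variance is inflated by less than a factor of two, i.e.\ by less than one bit per symbol; absorbing this gap with the Gaussian maximum-entropy bound (Lemma~1 of \cite{AnnapureddyVeeravalli}, already invoked in the main text) should give $T_2\le n$.

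The step I expect to be the main obstacle is exactly $T_2\le n$. The difficulty is structural: the channel matrix $\bigl(\begin{smallmatrix}h_1&h_2\\ h_3&h_4\end{smallmatrix}\bigr)$ is nonsingular, so neither output is an exact degraded version of the other, and the interferer $X_A^n$ is common to both outputs — hence there is no clean Markov relation $X_B^n\to Y_B^n\to Y_A^n$ to exploit. The delicate part is therefore lower bounding $I(X_B^n;h_2X_B^n+\tilde\nu^n|W_C)$ against the Gaussian-noise reference $I(X_B^n;h_2X_B^n+Z_A^n|W_C)$ when $\tilde\nu$ is non-Gaussian and its variance may exceed one; this is precisely where the factor-two variance bound extracted from $\rho|h_2|^2|h_3|^2\le|h_4|^2$ must be spent to account for the additive $n$-bit slack, and where care is needed so that the residual-noise entropy is controlled rather than the (uncontrollable) marginal $h(Y_A^n|Y_B^n,W_C)$.
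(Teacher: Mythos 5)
Your decomposition is valid and, up to reordering, retraces the first half of the paper's own proof: the exact identity $h(Y_A^n|W_C)-h(Y_B^n|W_C)=T_1+T_2$, the degradedness bound $T_1\le 0$ (this is the paper's step $(c)$, which cancels $I(X_A^n;h_1X_A^n+Z_A^n|W_C)$ against $I(X_A^n;h_1X_A^n+\tfrac{h_1}{h_3}Z_B^n|W_C)$ using $|h_1|^2\le|h_3|^2$), and the genie bound $I(X_B^n;Y_A^n|W_C)\le I(X_B^n;h_2X_B^n+Z_A^n|W_C)$ together reproduce exactly the paper's intermediate inequality \eqref{StB}. However, there is a genuine gap at the step you yourself flag as the main obstacle: $T_2\le n$, equivalently the lower bound on $T=I(X_B^n;Y_B^n|W_C)$. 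The inference ``the noise variance is inflated by less than a factor of two, hence the loss is less than one bit per symbol'' is not a valid step when the extra noise is non-Gaussian. Relative to the actual Gaussian noise floor of $Y_B$, the interference $h_3X_A^n$ is not weak at all (its interference-to-noise ratio is $\rho|h_3|^2>1$, possibly enormous); and mutual information cannot be compared across channels by second moments alone: for non-Gaussian inputs, shrinking the Gaussian noise variance from $1$ to $\sigma^2=|h_2/h_4|^2$ can increase $I(X_B^n;h_2X_B^n+\cdot\,|W_C)$ by as much as $n\log_2(1/\sigma^2)$, which is unbounded, so a direct max-entropy accounting of $\mathrm{Var}(\tilde\nu)<2$ against unit-variance $Z_A$ does not close.

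What is missing is the specific order of operations that the paper performs on $T$, which exploits the \emph{structure} of $\tilde\nu$ as (interference of bounded power) $+$ (sub-unit Gaussian), not its total variance: (i) first \emph{degrade} the $Y_B$-channel by adding independent Gaussian noise so that the Gaussian component reaches exactly unit variance --- this is legitimate because $\rho|h_3|^2>1$ makes that component sub-unit, and adding noise only lowers the mutual information; (ii) then \emph{peel off} the interference using the conditional independence $I(X_A^n;X_B^n|W_C)=0$, namely lower-bound the resulting mutual information by $h(\text{output}\,|\,\text{interference},W_C)-h(\text{interference}+Z^n|W_C)$, which equals the interference-free mutual information minus the penalty $h(\tilde X_A^n+Z^n|W_C)-h(Z^n)$; (iii) bound this penalty by $n$ via Gaussian max-entropy, which works precisely because after the scaling the interference power is at most $1$ against unit Gaussian noise --- this is where the condition $\rho|h_2|^2|h_3|^2\le|h_4|^2$ in \eqref{eq:ExtCondition3} is actually spent, giving $nC(1)=n$. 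This degrade-then-peel argument is the content of the paper's chain from the definition of $T$, through the substitution $\hat Z_B^n=\tfrac{\sqrt{\rho}h_3}{h_4}Z_B^n$, to the final bound $h(\tilde X_A^n+Z_B^n)-h(Z_B^n)\le nC(1)=n$. Your plan is completable --- and with your $h_2/h_4$ scaling the same three steps do yield $T_2\le n$ --- but the completion \emph{is} the substantive part of the proof, and it is absent from the proposal.
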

\begin{Rem}
The parameter $c$ in \eqref{eq:c_param} is chosen such that the conditions in \textcolor{comm_Ay}{\eqref{eq:ExtCondition3}} are satisfied, and therefore in converse for Theorem \ref{Th:TIN_con_gap_32X_Gauss}, $h(S_1^n|V_1) - h(\tilde{S}_1^n|V_1)\leq n$. 
\end{Rem}
\begin{proof}
We start by upper bounding the difference as follows
\begin{align*}
h(Y_A^n&|W_C)-h(Y_B^n|W_C)\nonumber\\
&=I(X_A^n,X_B^n;Y_A^n|W_C) - I(X_A^n,X_B^n;Y_B^n|W_C)\\
&\overset{(a)}{\leq} I(X_A^n,X_B^n;Y_A^n|W_C) - I(X_A^n,X_B^n;Y_B^n|W_C)\nonumber\\
&\quad + I(X_A^n;X_B^n|Y_A^n,W_C)\\
&\overset{(b)}{\leq}I(X_A^n;Y_A^n,X_B^n|W_C) + I(X_B^n;Y_A^n|X_A^n,W_C)\nonumber\\
&\quad - I(X_B^n;Y_B^n|W_C) - I(X_A^n;Y_B^n|X_B^n,W_C),
\end{align*}
where $(a)$ follows from the non-negativity of mutual information and $(b)$ follows by using chain rule. Note that $I(X_A^n;X_B^n|W_C)=0$, and hence, $I(X_A^n;Y_A^n,X_B^n|W_C)=I(X_A^n;Y_A^n|X_B^n,W_C)$. Using some standard steps, we get
\begin{align}
&h(Y_A^n|W_C)-h(Y_B^n|W_C)  \nonumber\\
&\leq I(X_A^n;h_1 X_A^n + Z_A^n|W_C) + I(X_B^n;h_2 X_B^n+Z_A^n|W_C) \notag \\ &\quad- I(X_B^n;Y_B^n|W_C) - I(X_A^n; h_1 X_A^n +\frac{h_1}{h_3}Z_B^n|W_C)\nonumber\\
&\stackrel{(c)}{\leq} I(X_B^n;h_2 X_B^n+Z_A^n|W_C) - I(X_B^n;Y_B^n|W_C),\label{StB}
\end{align}
where $(c)$ follows since $I(X_A^n;h_1 X_A^n + Z_A^n|W_C)\leq I(X_A^n; h_1 X_A^n +\frac{h_1}{h_3}Z_B^n|W_C)$ since $Z_A$ and $Z_B$ have the same distribution and $|h_1|^2\leq |h_3|^2$. Next, we proceed by bounding $T=I(X_B^n;Y_B^n|W_C)$. First, we write 
\begin{align}
T&= I(X_B^n; \tilde{X}^n_A + \tilde{X}^n_B+\tilde{Z}_B^n|W_C)
\end{align}
where we define $\tilde{X}_A= \frac{X_A}{\sqrt{\rho}}$,  $\tilde{X}_B=h_4\frac{X_B}{\sqrt{\rho}h_3}$, and  $\tilde{Z}_B=\frac{Z_B}{\sqrt{\rho}h_3}$. Note that $I(X_B^n; \tilde{X}_A^{n} + \tilde{X}_B^n+\tilde{Z}_B^n|W_C)\geq I(X_B^n; \tilde{X}_A^{n} + \tilde{X}_B^n+{Z}_B^n|W_C)$ since increasing the noise variance (by $1-\frac{1}{\rho h_3^2}$) leads to a degraded channel, and hence, decreases the mutual information. This leads to $T\geq I(X_B^n; \tilde{X}_A^{n} + \tilde{X}_B^n+Z_B^n|W_C)$. Now, observe that $ I(X_B^n; \tilde{X}_A^{n} + \tilde{X}_B^n+Z_B^n|W_C)$ is larger than $h(\tilde{X}_B^n+Z_B^n|\tilde{X}_A^{n},W_C) - h ( \tilde{X}_A^{n} + Z_B^n|X_B^n,W_C)$ since conditioning reduces entropy. As a result,
\begin{align*}
T &\geq h(\tilde{X}_B^n+Z_B^n|\tilde{X}_A^{n},W_C) - h ( \tilde{X}_A^{n} + Z_B^n|\tilde{X}_B^n,W_C) \\
&= h(\tilde{X}_B^n+Z_B^n|W_C) - h ( \tilde{X}_A^{n} + Z_B^n|W_C),
\end{align*}
since $h(\tilde{X}_B^n+Z_B^n|\tilde{X}_A^{n},W_C)=h(\tilde{X}_B^n+Z_B^n|W_B)=h(\tilde{X}_B^n+Z_B^n|W_A,W_B)$ because $(\tilde{X}_A^n,W_A)$ is independent of $\tilde{X}_B^n$ and $W_B$, and similarly $h(\tilde{X}_A^n+Z_B^n|\tilde{X}_B^{n},W_C)=h(\tilde{X}_A^n+Z_B^n|W_A,W_B)$. Thus,
\begin{align*}
T &\geq  I(X_B^n;\tilde{X}_B^n+Z_B^n|W_C)  - I (\tilde{X}_A^{n}; \tilde{X}_A^{n} + Z_B^n|W_C)\\
&= I(X_B^n; X_B^n+\widehat Z_B^n|W_C)- I (\tilde{X}_A^{n}; \tilde{X}_A^{n} + Z_B^n|W_C)\\
&\geq I(X_B^n; X_B^n+\frac{1}{h_2}Z_B^n|W_C) - I (\tilde{X}_A^{n}; \tilde{X}_A^{n} + Z_B^n|W_C),
\end{align*}
where \textcolor{comm_Ay}{$\widehat Z_B^n = \frac{\sqrt{\rho}h_3}{h_4}Z_B^n$} and the last step follows by increasing the noise variance by $\frac{1}{|h_2|^2} - \frac{\rho|h_3|^2}{|h_4|^2}\geq 0$ (cf. \eqref{eq:ExtCondition3}). Now, we plug in \eqref{StB} to obtain
\begin{align}
&h(Y_A^n|W_C)-h(Y_B^n|W_C)  \nonumber\\
&\leq I(X_B^n;h_2 X_B^n+Z_A^n|W_C) - I(X_B^n; X_B^n+\frac{1}{h_2}Z_B^n|W_C)\nonumber\\
&\quad  + I (\tilde{X}_A^{n}; \tilde{X}_A^{n} + Z_B^n|W_C) \\
&= h (\tilde{X}_A^{n} + Z_B^n|W_C) - h(Z_B^n|W_C) \\
&\leq h ( \tilde{X}_A^{n} + Z_B^n) - h(Z_B^n),
\end{align}
which follows since conditioning reduces entropy and since $Z_B^n$ is independent of $W_C$.  Finally, $h ( \tilde{X}_A^{n} + Z_B^n) - h(Z_B^n)\leq nC(1)=n$ where $C(1)$ is the capacity of a Gaussian channel with input $\tilde{X}_A$ and noise $Z_B^n$, both of unit-power. This concludes the proof of Lemma \ref{Lemma:Entropy_diff_Gaussian}.
\end{proof}

\section{}
\label{app:Constant Gap Analysis for TDMA-TIN}
In this appendix, we transform the upper bound given in Lemma \ref{Th:UB_Det_32X} into a GDoF upper bound. We distinguish between two cases: $\alpha_{j_2i_3}>\alpha_{j_2i_1}$ and $\alpha_{j_2i_3}\leq\alpha_{j_2i_1}$.

\noindent{\bf Case $\alpha_{j_2i_3}>\alpha_{j_2i_1}$:} 
In this case, $d=1$. If $\alpha_{j_2i_1}-\alpha_{j_1i_1}\leq \alpha_{j_2i_3}-\alpha_{j_1i_3}-\alpha_{j_2i_1}$, then $c^2=\rho^{\alpha_{j_2i_1}-\alpha_{j_1i_1}}$. By substituting in $B(\boldsymbol{p})$, we can obtain
\begin{align*}
B(\boldsymbol{p})
&< C\left(\rho^{\alpha_{j_1i_2}} + \frac{\rho^{\alpha_{j_1i_1}}}{\rho^{\alpha_{j_2i_1}}}\right)\notag\\ 
&\quad + C\left(\rho^{\alpha_{j_2i_1}} + \rho^{\alpha_{j_2i_3}} +\frac{\rho^{\alpha_{j_2i_2}} }{\rho^{\alpha_{j_1i_2}}}\right)+1\\
&< \max\{\alpha_{j_1i_2},\alpha_{j_1i_1}-\alpha_{j_2i_1}\}\log_2(\rho)
\\
&\quad+ \max\{\alpha_{j_2i_1},\alpha_{j_2i_3},\alpha_{j_2i_2}-\alpha_{j_1i_2}\}\log_2(\rho)+5\nonumber.
\end{align*}
For the other case where $\alpha_{j_2i_1}-\alpha_{j_1i_1}> \alpha_{j_2i_3}-\alpha_{j_1i_3}-\alpha_{j_2i_1}$, we have $c^2=\rho^{\alpha_{j_2i_3}-\alpha_{j_2i_1}-\alpha_{j_1i_3}}$. Using  similar steps as above, we can get
\begin{align*}
B(\boldsymbol{p})
&< \max\{\alpha_{j_1i_2},\alpha_{j_1i_3}-(\alpha_{j_2i_3}-\alpha_{j_2i_1})\}\log_2(\rho)
\\
&\quad+ \max\{\alpha_{j_2i_1},\alpha_{j_2i_3},\alpha_{j_2i_2}-\alpha_{j_1i_2}\}\log_2(\rho)+5\nonumber.
\end{align*}
By combining both cases, dividing by $\log(\rho)$, and letting $\rho\to\infty$, we get the following GDoF bound
\begin{align}
&d_{\Sigma,1}(\boldsymbol{p})< \max\{\alpha_{j_2i_1},\alpha_{j_2i_3},\alpha_{j_2i_2}-\alpha_{j_1i_2}\}\label{eq:UB_3x2_appG5}\\
&\quad +\max\{\alpha_{j_1i_2},\alpha_{j_1i_1}-\alpha_{j_2i_1},\alpha_{j_1i_3}-(\alpha_{j_2i_3}-\alpha_{j_2i_1})\}
\nonumber.
\end{align}

\noindent{\bf Case $\alpha_{j_2i_3}\leq \alpha_{j_2i_1}$:}
In this case, $d=0$ and $c^2=\rho^{\alpha_{j_2i_1}-\alpha_{j_1i_1}}$. Similar to the previous case, we can obtain
\begin{align}
\label{eq:UB_3x2_appG6}
d_{\Sigma,2}(\boldsymbol{p})
&< \max\{\alpha_{j_2i_1},\alpha_{j_2i_3},\alpha_{j_2i_2}-\alpha_{j_1i_2}\}\\
&\ +\max\{\alpha_{j_1i_2},\alpha_{j_1i_3},\alpha_{j_1i_1}-\alpha_{j_2i_1}\} 
\nonumber. 
\end{align}

Now, by combining the results in \eqref{eq:UB_3x2_appG5} and \eqref{eq:UB_3x2_appG6}, we obtain
\begin{align*}
&d_\Sigma(\boldsymbol{p})< \max\{\alpha_{j_2i_1},\alpha_{j_2i_3},\alpha_{j_2i_2}-\alpha_{j_1i_2}\}\nonumber\\
&\ \  +\max\{\alpha_{j_1i_2},\alpha_{j_1i_1}-\alpha_{j_2i_1},\alpha_{j_1i_3}-(\alpha_{j_2i_3}-\alpha_{j_2i_1})^+\}
\end{align*} 
and as a result, we get $D(\boldsymbol{p})$ as given in \eqref{GDoFUpperBound}.


\ifCLASSOPTIONcaptionsoff
  \newpage
\fi
\bibliography{bibl}
\end{document}